
\documentclass[a4paper]{article}

\usepackage{url} 
\usepackage{times} 
\usepackage{verbatim}
\usepackage[T1]{fontenc}
\usepackage[swedish,english]{babel}
\usepackage{amsmath}
\usepackage{amssymb}
\usepackage{mathabx}
\usepackage[dvipdf]{graphicx}
\usepackage{color}
\usepackage{url}

\usepackage{epsfig}
\usepackage{float}
\usepackage[small]{caption}
\usepackage{subfig}
\usepackage{eurosym}
\graphicspath{{.}}

\def\reals{\mathbb{R}}

\def\logit{\mathop{\rm logit}\nolimits}
\def\diag{\mathop{\rm diag}}
\def\bin{\mathop{\rm Bin}}
\def\R{\mathbb{R}}

\def\half{\frac{1}{2}}


\newtheorem{theorem}{Theorem}
\newtheorem{proposition}[theorem]{Proposition}

\newenvironment{proof}
{\begin{trivlist}\item[\, 
{\bf Proof.}]}{{\hfill $\square$}\end{trivlist}}

\DeclareMathOperator*{\argmax}{\arg\!\max\ }

\title{A hidden Markov approach to\\ disability insurance}

\author{
Boualem Djehiche\footnote{Department of Mathematics, KTH Royal Institute of Technology, Sweden, {boualem@kth.se}. }
\and Björn Löfdahl \footnote{Department of Mathematics, KTH Royal Institute of Technology, Sweden, {bjornlg@kth.se}. }
}

\date{\today}

\begin{document}

\maketitle

\begin{abstract}
Point and interval estimation of future disability inception and recovery rates are predominantly carried out by combining generalized linear models (GLM) with time series forecasting techniques into a two-step method involving parameter estimation from historical data and subsequent calibration of a time series model. This approach may in fact lead to both conceptual and numerical problems since any time trend components of the model are incoherently treated as both model parameters and realizations of a stochastic process. We suggest that this general two-step approach can be improved in the following way: First, we assume a stochastic process form for the time trend component. The corresponding transition densities are then incorporated into the likelihood, and the model parameters are estimated using the Expectation-Maximization algorithm. We illustrate the modelling procedure by fitting the model to Swedish disability claims data.
\end{abstract}

\textbf{Keywords:} Disability insurance, Hidden Markov model, Maximum Likelihood, Expectation-Maximization.\vspace{4mm}

\section{Introduction}

To determine premiums and reserves associated with health and disability insurance policies, the insurer needs predictions of the future rates of disability inception and recovery. While earlier research provides a solid base for disability modelling, new studies are required as the field is in constant change due to policy reforms and amendments to the existing regulations. For instance, the Swedish government launched major reforms of the national sickness insurance system in 2008, changing the rules for obtaining benefits from the Social Insurance Agency. This reform has been of major importance to the reduction in sickness absence. As noted by Aro {\it et al.} \cite{ADL13}, research of sickness and disability on Swedish data undertaken before 2008 may no longer provide an accurate description of the disability dynamics after the reform. As of October 2014, the Swedish government has suggested that the reforms of 2008 should essentially be reversed, a change that would require a drastic increase in premiums and reserves. This proposal highlights the need to study the calendar time dynamics of disability.

A popular approach to estimating disability inception and recovery rates is based on the generalized linear models framework. Renshaw and Haberman \cite{RenshawHaberman2000} model recovery, mortality and inception time trends in permanent health insurance using Poisson regression. Christiansen {\it et al.} \cite{Christiansen} model recovery, mortality and inception using the functional data approach of Hyndman and Ullah \cite{HyndmanUllah}. Aro {\it et al.} \cite{ADL13} propose logistic regression models for disability inception and termination. Similar methods are used for modelling population mortality, where perhaps the most well known is the Lee-Carter model \cite{leecarter1992} and its extensions, including the Poisson log-bilinear model of Brouhns {\it et al.} \cite{brouhns2002poisson}.

For the purpose of obtaining point and interval estimates of future disability or mortality rates, it is customary to combine GLMs with time series forecasting techniques into the following two-step method: In the first step, the parameters of the GLM are estimated from historical data. In the second step, a time trend component $\nu_t$ is assumed to follow a time series model, where a popular choice is the random walk with drift, and the parameters of the model are fitted to the estimated values of $\{\nu_t\}$. Prediction or simulation of future transition rates are obtained by prediction or simulation from the time series model for $\nu_t$. This two-step approach provides an easy way of fitting the model to data and simulating future outcomes, and it has been employed by e.g. Brouhns {\it et al.} \cite{brouhns2002poisson}, Christiansen {\it et al.} \cite{Christiansen}, Djehiche and L\"ofdahl \cite{DjehicheLofdahl14} and others.


An issue with the two-step approach is that at first, $\{\nu_t\}$ are considered parameters to be estimated. After estimating them, the assumption is altered so that $\nu$ is treated as a stochastic process. This may lead to both conceptual and numerical problems. In particular, the volatility of $\nu$ tends to be overestimated, which may have significant impact on pricing and risk management of insurance products. Qualitatively, this stems from the fact that yearly variations in the parameter values are caused by variations in the underlying process (systematic variation) as well as variations in the underlying population (idiosyncratic variation). The two-step approach makes no distinction between idiosyncratic and systematic variations.

The incoherence of the two-step approach has previously been pointed out by Czado {\it et al.} \cite{czado2005bayesian}. They propose to avoid this deficiency by integrating both steps into a Bayesian model, where the yearly values of the process corresponding to $\nu$ are all treated as random variables with given prior densities. The model parameters are then estimated using the Gibbs sampler and Metropolis-Hastings algorithm.

We suggest an alternative solution to this problem that does not rely on Bayesian statistics: First, we assume a stochastic process form for the time component $\nu$. Then, we fit the model over all time periods simultaneously, incorporating the transition densities of $\nu$ into the likelihood. Maximization of the likelihood cannot, however, be carried out directly, since now we no longer consider $\{\nu_t\}$ to be parameters. Instead, $\nu$ is treated as an unobservable stochastic process, so that the model is formulated as a Hidden Markov model (HMM). We then proceed using the Expectation-Maximization (EM) algorithm of Dempster {\it et al.} \cite{dempster1977maximum}. To the best of our knowledge, this type of EM-algorithm has not been used before for the purpose of estimating disability rates. It is, however, well known in other fields, such as finance \cite{duffie2009}.

As a starting point for the improved model, we consider the disability model from Aro {\it et al.} \cite{ADL13}:
Let $E_{x,t}$ be the number of healthy individuals aged $x$ at the beginning of time period $t$ in a given disability insurance scheme. We denote by $D_{x,t}$ the number of  individuals falling ill amongst the $E_{x,t}$ insured healthy individuals during time interval $[t,t+1)$. The authors assume that the conditional distribution of $D_{x,t}$ given $E_{x,t}$ is binomial:
\begin{equation}\label{eq:bin0}
D_{x,t}\sim\bin(E_{x,t},p_{x,t}),
\end{equation}
where $p_{x,t}$ is the probability that an $x$-year-old individual randomly selected at $t$ falls ill during $[t,t+1)$. Further, they suggest to model the logistic disability inception probabilities by
\begin{equation}\label{eq:logit0}
\logit p_{x,t} =\sum_{i=1}^m\nu^i_t\phi^i(x),
\end{equation}
where $\phi^i$ are user-defined basis functions, and $\nu^i_t$ are risk factors to be estimated from data. The authors also propose a straightforward extension of this model to disability termination modelling. 

The historical values of the risk factors $\nu_t=(\nu^1_t,\ldots,\nu^m_t)$ can be easily obtained by maximum likelihood estimation as follows. Given the historical values of $D_{x,t}$ and $E_{x,t}$, the log-likelihood function for yearly values of $\nu_t$ can be written using \eqref{eq:bin0} and \eqref{eq:logit0} as
\begin{align}
l_t(\nu_t;D_{\cdot,t}) &= \sum_{x\in X}\Big[D_{x,t}\sum_{i=1}^m \nu_t^i\phi^i(x) - E_{x,t}\log\big(1+\exp\big\{{\sum_{i=1}^m \nu_t^i\phi^i(x)}\big\}\big)\Big].\label{ll}
\end{align}
\noindent
Aro and Pennanen \cite{AroPennanen} show that if the basis functions are linearly independent, the yearly log-likelihood $l_t(\cdot;D_{\cdot,t})$ is strictly concave. Hence, maximizing $l_t(\cdot;D_{\cdot,t})$ over $\nu_t\in\reals^m$ using numerical methods gives a unique estimate of the vector $\nu_t$ for each $t$.

In this paper, we propose instead to treat $\nu$ as a hidden Markov process with transition densities parameterized by $\theta$, say, and use the Expectation-Maximization algorithm as follows: Given a parameter estimate $\theta^k$, integrate the complete data log-likelihood $l(\theta;D_{\cdot,1:n},\nu_{1:n})$ with respect to the distribution of $\nu_{1:n}:=(\nu_1,\ldots,\nu_n)$ conditional on the observations $D_{\cdot,1:n}:= (D_{\cdot,1},\ldots,\ D_{\cdot,n})$, e.g. let
\begin{equation}
Q(\theta|\theta^k) = E^{\theta^k}[l(\theta;D_{\cdot,1:n},\nu_{1:n})|D_{\cdot,1:n}].
\end{equation}
\noindent
Next, we maximize $Q$ w.r.t. $\theta$ to obtain
\begin{equation}
\theta^{k+1} = \argmax_\theta Q(\theta|\theta^k).
\end{equation}
\noindent
Iterating over expectation and maximization steps, the output of the EM-algorithm is a sequence $\{\theta^k\}$ of parameter estimates. Under technical conditions that are usually hard to verify, the sequence $\{\theta^k\}$ will eventually converge to a stationary point $\theta^*$ with $L(\theta^*)=L^*$ being the corresponding stationary point of the log-likelihood function. If the likelihood function is also unimodal, $\{\theta^k\}$ will converge to $\theta^* = \argmax_\theta L(\theta)$. See Wu \cite{wu1983convergence} for details.

The outline of this paper is as follows. In Section \ref{sec:inception_multi}, we propose a model for disability inception rates and show how the model parameters can be estimated using the EM algorithm. Section \ref{sec:inception_swedish} illustrates the modelling procedure by fitting the model to disability claims data from the Swedish insurance company Folksam. In Section \ref{sec:termination}, we propose a version of the model for the estimation of disability termination rates. In Section \ref{sec:termination_swedish}, we fit the termination model to disability claims data from Folksam.

\section{Disability inception model}\label{sec:inception_multi}
Let $E_{x,t}$ be the number of healthy individuals aged $x$ at the beginning of time period $t$ in a given disability insurance scheme. We denote by $D_{x,t}$ the number of  individuals falling ill amongst the $E_{x,t}$ insured healthy individuals during time interval $[t,t+1)$. Further, let $\nu$ be an $m$-dimensional Brownian motion starting at $\nu_0$ with drift $\mu$ and Cholesky matrix $A$. This choice of $\nu$ corresponds directly to the frequently used ARIMA(0,1,0) random walk. We assume that the conditional distribution of $D_{x,t}$ given $E_{x,t}$ and $\nu_{t}$ is binomial:
\begin{equation}\label{eq:bin2}
D_{x,t}\sim\bin(E_{x,t},p_{x,t}),
\end{equation}
where $p_{x,t}$ given by
\begin{equation}
p_{x,t}:=\frac{1}{1+e^{-g(x,\nu_t)}},
\end{equation}
\noindent
is the probability that an individual randomly selected at $t$ falls ill during $[t,t+1)$. Here, the selection of $g:\R^+\times\R^{m}\mapsto\R$ is a model choice. We adopt the basis function approach from \cite{ADL13}, and choose a $g$ of the form
\begin{equation}\label{eq:logit2}
g(x,\nu_t) =\sum_{i=1}^m\nu^i_t\phi^i(x),
\end{equation}
where $\phi^i,\ i=1,\ldots,m$, are user-defined basis function. 

Now, assume that we observe $D_{x,1:n}:= (D_{x,1},\ldots,D_{x,n})$ and \\ $E_{x,1:n}:= (E_{x,1},\ldots,E_{x,n})$, for $x$ from a given set $X$ of ages. Let $\theta = (\mu, A, \nu_0)$. Then, the complete data log-likelihood is given by
\begin{align}\label{logl2}
l(\theta;D_{\cdot,1:n},\nu_{1:n}) &= \sum_{t=1}^n\Big[l_t(\nu_t;D_{\cdot,t})+ \log f_{\nu_t|\nu_{t-1}}(\theta) + c_t\Big],
\end{align}
\noindent
where $f$ is the density of $\nu_t$ given $\nu_{t-1}$, and $c_t$ is a constant. From the Brownian motion assumption, we have
\begin{align}\label{eq:nu_dens}
\log f_{\nu_t|\nu_{t-1}}(\theta) =& -(\nu_t-\nu_{t-1}-\mu)^T(AA^T)^{-1}(\nu_t-\nu_{t-1}-\mu)\nonumber\\
&- \frac{1}{2}\log(\det(AA^T)).
\end{align}
\noindent
This is a direct extension of the model from \cite{ADL13}, in that instead of fitting each time period separately, we consider all time periods simultaneously by summation of the log-likelihood over all time periods. In addition, we include a term corresponding to the density of $\nu_t$ given $\nu_{t-1}$.

The following proposition is useful for obtaining point estimates of and confidence intervals for $\nu_t$.

\begin{proposition}\label{prop:filter_concave}
The filter density functions $\phi_{\nu_t}(\cdot) = f_{\nu_t|D_{\cdot,1:t}}(\cdot)$ are log-concave on $\R^m$.
\end{proposition}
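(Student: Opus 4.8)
The plan is to argue by induction on $t$ along the standard Bayesian filtering recursion, showing that log-concavity is propagated through both the prediction and the update steps. The two structural facts I would rely on are: (i) a product of log-concave functions is log-concave, and (ii) Pr\'ekopa's theorem, that the marginal of a jointly log-concave density is again log-concave. The latter is the engine that makes the prediction step work, and I expect it to be the only non-routine ingredient.

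First I would record the two building blocks. The emission likelihood $f_{D_{\cdot,t}|\nu_t}$, viewed as a function of $\nu_t$, equals $\exp\{l_t(\nu_t;D_{\cdot,t})\}$ up to a $\nu_t$-independent binomial factor; since Aro and Pennanen show that $l_t(\cdot;D_{\cdot,t})$ is strictly concave when the basis functions are linearly independent, the emission density is log-concave in $\nu_t$. The transition density \eqref{eq:nu_dens} is Gaussian: its logarithm is the negative of a positive-definite quadratic form in the affine argument $\nu_t-\nu_{t-1}-\mu$, hence concave, and in fact jointly concave in the pair $(\nu_{t-1},\nu_t)$. Thus $f_{\nu_t|\nu_{t-1}}$ is jointly log-concave on $\R^m\times\R^m$.

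For the induction, the base case is immediate: $\nu_0$ is deterministic, so the prior $f_{\nu_1}$ is Gaussian, hence log-concave, and $\phi_{\nu_1}\propto f_{D_{\cdot,1}|\nu_1}\,f_{\nu_1}$ is a product of two log-concave functions. For the inductive step, assume $\phi_{\nu_{t-1}}$ is log-concave. In the prediction step the predictive density is
\[
f_{\nu_t|D_{\cdot,1:t-1}}(\nu_t) = \int f_{\nu_t|\nu_{t-1}}(\nu_t)\,\phi_{\nu_{t-1}}(\nu_{t-1})\,d\nu_{t-1}.
\]
The integrand is a product of the jointly log-concave transition density and $\phi_{\nu_{t-1}}$, the latter being log-concave in $\nu_{t-1}$ and constant (hence log-concave) in $\nu_t$, so it is jointly log-concave in $(\nu_{t-1},\nu_t)$. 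Applying Pr\'ekopa's theorem to marginalize out $\nu_{t-1}$ yields a log-concave $f_{\nu_t|D_{\cdot,1:t-1}}$. The update step then gives $\phi_{\nu_t}(\nu_t)\propto f_{D_{\cdot,t}|\nu_t}\,f_{\nu_t|D_{\cdot,1:t-1}}(\nu_t)$, again a product of log-concave functions, and dividing by the positive normalizing evidence preserves log-concavity. This closes the induction.

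The main obstacle is the prediction step, where preservation of log-concavity under integration is not elementary and requires Pr\'ekopa's theorem; the remaining work (identifying the emission density with $\exp l_t$, checking joint concavity of the Gaussian exponent, and the closure under products) is routine. A minor technical point to confirm along the way is that the densities are integrable, so that the marginals and normalizations are well defined, but this follows from the Gaussian tails of the transition density dominating the bounded logistic emission factor.
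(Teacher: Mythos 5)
Your proof is correct and rests on exactly the same ingredients as the paper's: log-concavity of the emission likelihood via the concavity of $l_t$, joint log-concavity of the Gaussian transition kernel, closure of log-concavity under products, and preservation of log-concavity under marginalization (Pr\'ekopa's theorem, for which the paper cites the equivalent Corollary~2 of Feyel). The only difference is organizational --- you run an induction through the predict--update recursion, marginalizing one state at a time, whereas the paper writes the filter density as the emission factor times a single integral over the entire past trajectory $x_{1:t-1}$ and applies the marginalization result once to a log-concave integrand on $\R^m\times\R^{m(t-1)}$ --- so this is essentially the same argument.
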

\begin{proof}
Using Bayes' theorem, the law of total probability and the Markov property of $\nu$, we can write $f_{\nu_t|D_{\cdot,1:t}}(x_t)$ as
\begin{equation}
f_{\nu_t|D_{\cdot,1:t}}(x_t)\ \propto\ p_{D_{\cdot,t}|\nu_t}(x_t)I(x_t),
\end{equation}
\noindent
where
\begin{equation}
I(x_t) = \int f_{\nu_t|\nu_{t-1}}(x_t-x_{t-1})\prod_{k=1}^{t-1}p_{D_{\cdot,k}|\nu_k}(x_k)f_{\nu_k|\nu_{k-1}}(x_k-x_{k-1})dx_{1:t-1}.
\end{equation}
\noindent
Since $\log p_{D_{\cdot,t}|\nu_t}(x_t)=l_t(x_t;D_t)$ is concave, it remains to show that $I(x_t)$ is log-concave. It is well known that the densities $f_{\nu_k|\nu_{k-1}}(x_k-x_{k-1})$ are log-concave on $\R^m\times \R^m$. Hence, the integrand $h(x_t,x_{1:t-1})$ defined by
\begin{equation}
h(x_t,x_{1:t-1}) = f_{\nu_t|\nu_{t-1}}(x_t-x_{t-1})\prod_{k=1}^{t-1}p_{D_{\cdot,k}|\nu_k}(x_k)f_{\nu_k|\nu_{k-1}}(x_k-x_{k-1})
\end{equation}
\noindent
is log-concave on $\R^m\times \R^{m(t-1)}$. From \cite[Corollary 2]{Feyel}, log-concavity of $h$ directly implies log-concavity of $I(x_t)$ on $\R^m$.
\end{proof}
\noindent
It follows from Proposition \ref{prop:filter_concave} that the filter distributions are unimodal. This is a type of identification attribute of the model: Estimating the historical values of $\nu$ using the filter densities $\phi_{\nu_t}$ admits identification of $\nu_t,\ t=1,\ldots,n,$ by their respective modes. Confidence intervals for historical values of $\nu_t$ can be obtained directly as quantiles of $\phi_{\nu_t}$.

Unfortunately, the filter distributions cannot be calculated directly. However, it is relatively easy to sample from them using particle filter methods, given that we have estimates of $\theta$. By sampling from the filter distributions we can then obtain an updated estimate of $\theta$ using the Expectation-Maximization algorithm, with $\nu$ treated as a hidden Markov process. The choice of $\theta^0$ is important since we have not been able to show that the log-likelihood function is unimodal. We suggest choosing $\theta^0$ by fitting the model from Aro {\it et al.} for $t=1,\ldots,n$ and estimating $\theta^0$ from the time series of estimated values of $\nu_t$. This procedure should yield a good start guess for the parameters $\nu_0$ and $\mu$, while the start guess for the standard deviations given by $\sqrt{\diag(AA^T)}$ should be overestimated.

Integrating the log-likelihood and discarding all terms that do not depend on $\theta$, we obtain
\begin{align}\label{eq:Q_tilde2}
Q(\theta|\theta^k) = \sum_{t=1}^n&\Big[-\frac{1}{2}E^{\theta^k}[(\nu_t-\nu_{t-1}-\mu)^T(AA^T)^{-1}(\nu_t-\nu_{t-1}-\mu)|D_{\cdot,1:n}]\nonumber\\
& - \frac{1}{2}\log(\det(AA^T)) \Big].
\end{align}
\noindent
In order to maximize  $Q$ w.r.t. $\theta$, we need to evaluate the conditional expectations appearing in \eqref{eq:Q_tilde2}. This is not a trivial problem, since it is required to determine the density of $\nu_t-\nu_{t-1}$ conditional on $D_{\cdot,1:n}$ using Bayes' theorem. However, there exist numerical techniques that allow for efficient evaluation of the expectations, including particle filter methods. Further, we need to write $Q$ on a form that allows for easy maximization. We start with the latter task.

\subsection{Maximization}\label{sec:EM}
Simple but tedious linear algebra yields the following expression for $Q$:
\begin{align}\label{eq:Q_tilde3}
Q(\theta|\theta^k) =& -\frac{n}{2}\log(\det(AA^T)) - \half tr((AA^T)^{-1}C^T),
\end{align}
\noindent
where
\begin{align}
C_{ij} =& S_{ij}-\mu_iS_j-\mu_jS_i+n\mu_i\mu_j +E_{ij}-\nu_0^iE_j - \nu_0^jE_i+\nu_0^i\nu_0^j\nonumber\\
 &- \mu_i(E_j-\nu_0^j)-\mu_j(E_i-\nu_0^i),\\
S_{ij} =& \sum_{t=2}^nE^{\theta^k}[(\nu_t^i-\nu_{t-1}^i)(\nu_t^j-\nu_{t-1}^j)|D_{\cdot,1:n}],\\
S_i =& \sum_{t=2}^nE^{\theta^k}[(\nu_t^i-\nu_{t-1}^i)|D_{\cdot,1:n}],\\
E_{ij} =& E^{\theta^k}[\nu_1^i\nu_1^j|D_{\cdot,1:n}],\\
E_i =& E^{\theta^k}[\nu_1^i|D_{\cdot,1:n}].
\end{align}
\noindent
Given $S_{ij},\ S_i,\ E_{ij}$ and $E_i$ for $i,j=1,\ldots,m$, it is a simple matter to maximize \eqref{eq:Q_tilde3} in the following way: First, taking derivatives, the optimal $\mu$ and $\nu_0$ are given by
\begin{align}
\mu_i   &= \frac{1}{n-1}S_i,\\
\nu_0^i &= E_i - \frac{1}{n-1}S_i.
\end{align}
\noindent
Substituting $\mu$ and $\nu_0$ back into $C_{ij}$ yields
\begin{equation}
C_{ij} = S_{ij} + E_{ij} -\frac{1}{n-1}S_iS_j -E_iE_j.
\end{equation}
\noindent
Now, since C is no longer a function of $\theta$, it suffices to consider the mapping $A\mapsto\tilde Q(A)$ defined by
\begin{align}\label{eq:Q_tildeA}
\tilde Q(A) =& -\frac{n}{2}\log(\det(AA^T)) - \half tr((AA^T)^{-1}C^T)\nonumber\\
=& -\frac{n}{2}\log(\det(AA^T)) - \frac{n}{2} tr((AA^T)^{-1}\bar C^T),
\end{align}
\noindent
where $\bar C_{ij} = \frac{1}{n}C_{ij}$. It is well known that \eqref{eq:Q_tildeA} obtains its maximum value at
\begin{equation}
AA^T = \bar C,
\end{equation}
\noindent
provided that $\bar C$ is positive definite. Occasionally, due to Monte Carlo error, it may happen that $\bar C$ is not positive definite. This can be remedied in several ways, we may for example attempt to maximize \eqref{eq:Q_tildeA} numerically. Another option is to resample and perform the E-step anew, and attempt the M-step once more using the updated expectations.

\subsection{Expectation}\label{sec:EM2}
We now turn towards the task of computing the conditional expectations. The conditional expectations $S_{ij},\ S_i,\ E_{ij}$ and $E_i$ are of a form suitable to the particle-based rapid incremental smoother, or PaRIS, algorithm due to Westerborn and Olsson \cite{westerbornolsson2014}.

A particle filter is a necessary requirement for implementing the PaRIS algorithm, and for this purpose we choose to implement a simple bootstrap particle filter. The filter distributions $\phi_{\nu_t}$, that is, for each $t$, the distribution of $\nu_t$ conditional on $D_{\cdot,1:t}$, are estimated in the following way: Given a sample of $\nu_{t-1}$, we first sample $N$ particles of $\nu_t$ from $f_{\nu_t|\nu_{t-1}}(\theta^k)$ to obtain $z_k = (z_k^i)_{i=1,\ldots,m},\ k=1,\ldots,N$. Each particle $z_k$ is then given the weight $w_k \propto \exp\{l_t(z_k;D_{\cdot,1:t})\}$, and the filter probability mass function is estimated by $\hat\phi_{\nu_t}(z_k) = w_k$. Finally, we bootstrap from $z_k,\ k=1,\ldots,N$ with probabilities $w_k,\ k=1,\ldots,N$, to obtain a sample of $\nu_t$, and repeat the procedure until $t=n$. We estimate the yearly values of $\nu_t$ for $t=1,\ldots,n$ by
\begin{equation}\label{eq:nu_hat}
\widehat \nu_t^i = \sum_{k=1}^N w_k^iz_k^i,\ i=1,\ldots,m.
\end{equation}
\noindent
Confidence intervals for $\nu_t^i$, $i=1,\ldots,m$, are obtained by calculating the empirical quantiles based on $\hat\phi_{\nu_t}(\cdot)$.
Finally, the expectations $S_{ij},\ S_i,\ E_{ij}$ and $E_i$ are estimated using the PaRIS algorithm as outlined in \cite{westerbornolsson2014}.

Note that this separation of expectation and maximization is possible due to the model specification: Since the basis functions are chosen {\it a priori} and are not themselves to be estimated, all terms of $Q$ that depend on both $\theta$ and $\nu$ are product terms. No other non-multiplicative dependencies are present. This allows us to write $Q$ of the form \eqref{eq:Q_tilde3}, which allows for separating the expectation and maximization steps as required.

Consider the case where the model specification was written so that the basis functions were also to be estimated from data. For example, we may consider the Lee-Carter type model from \cite{brouhns2002poisson} for the force of mortality $q_{x,t}$:
\begin{equation}\label{eq:LC}
q_{x,t}:=e^{\alpha_x+\beta_x\kappa_t},
\end{equation}
\noindent
where $\alpha_x$ and $\beta_x$ are to be estimated along with $\kappa_t$. Then, $Q$ will contain terms $T_{x,t}$ of the form
\begin{equation}
T_{x,t}= E_{x,t}e^{\alpha_x}E^{\theta^k}[e^{\beta_x\kappa_t}|D_{\cdot,1:t}].
\end{equation}
\noindent
In our approach, the conditional expectation $E^{\theta^k}[e^{\beta_x\kappa_t}|D_{\cdot,1:t}]$ can only be estimated for a fixed $\beta_x$. Hence, we cannot feasibly implement this version of the EM-algorithm for the model specified by \eqref{eq:LC}, except by estimating this quantity over a range of values for $\beta_x$ and using interpolation and extrapolation over this range in the M-step. It is, however, possible to fit a model of the type \eqref{eq:logit2} to mortality data using the techniques of this Section. For a discussion on how to choose the basis functions $\{\phi^i\}$ for mortality modelling we refer to Aro and Pennanen \cite{AroPennanen}.

\section{Fitting Swedish disability inception rates}\label{sec:inception_swedish}
In this section, we implement the EM-algorithm from Sections \ref{sec:EM}-\ref{sec:EM2} for the disability inception model from Section \ref{sec:inception_multi}, and fit it to population data from Folksam.

\subsection{Two-factor model}
We implement the model from Section \ref{sec:inception_multi} with basis functions given by
\begin{equation*}
\phi^1(x) =  \frac{64-x}{39} \quad  \text{and} \quad \phi^2(x) = \frac{x-25}{39},
\end{equation*}
for $x \in [25,64]$. The linear combination $\sum_{i=1}^2 \nu^i_t\phi^i(x)$ is also linear. 
Note that the same linear form for the curve of logit $p_{\cdot,t}$ could have been obtained using any two linearly independent linear basis functions. However, this particular choice ensures a certain natural interpretation of the stochastic process $\nu$. Namely, for every $t$,
\begin{equation*}
\logit p_{25,t}=\nu^1_t\phi^1(25)+\nu^2_t\phi^2(25)=\nu^1_t,
\end{equation*}
and, similarly, $\logit p_{64,t}=v_t^2$. Hence, two components of $\nu$ represent the logit disability inception probabilities of ages $25$ and $64$, respectively. 

The EM-algorithm stabilizes to within Monte Carlo error after about 180 iterations. We run it for 20 more iterations and estimate $\theta$ as the average over the 20 last iterations. The value of $Q(\theta^k|\theta^{k-1})$ for $k=1,\ldots,\ 200$ is presented in Figure \ref{fig:inception_Q_2par}. The estimated inception probabilities from the Hidden Markov model (HMM) for the years 2000-2011 are displayed in Figures \ref{fig:inception_2par_params}-\ref{fig:inception_2par_mesh}. For reference, they are compared to the estimations from \cite{ADL13}, (hereafter referred to as the multi-period model). Note that, due to confidentiality, the actual values of the estimates are not reported. Figures \ref{fig:inception_2par_filter_1}-\ref{fig:inception_2par_filter_2} display the estimated filter densities for $\nu^1$ and $\nu^2$, respectively. Indeed, as inferred from Proposition \ref{prop:filter_concave}, the estimated filter distributions are for the most part unimodal. They also seem to be symmetric, which makes estimation of the yearly values of $\nu^1$ and $\nu^2$ from their corresponding mean values or modes equivalent. Table \ref{table:2par} displays the estimated drift and volatility parameters from the HMM as a fraction of the corresponding estimates from the multi-period model.

\begin{figure}[!ht]
\begin{center}
\epsfig{file=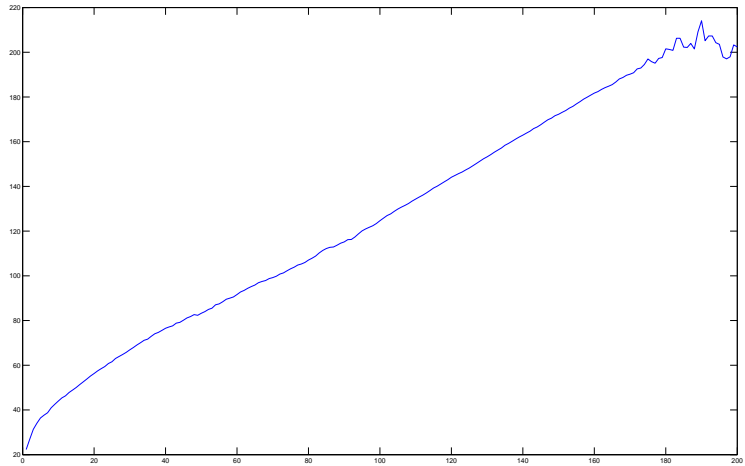,height=0.6\linewidth, width=\linewidth, angle=0}
\vspace{-25pt}
\caption{The value of $Q(\theta^k|\theta^{k-1})$ for $k=1,\ldots,\ 200$.}
\label{fig:inception_Q_2par}
\end{center}
\end{figure}

\begin{figure}[!ht]
\begin{center}
\epsfig{file=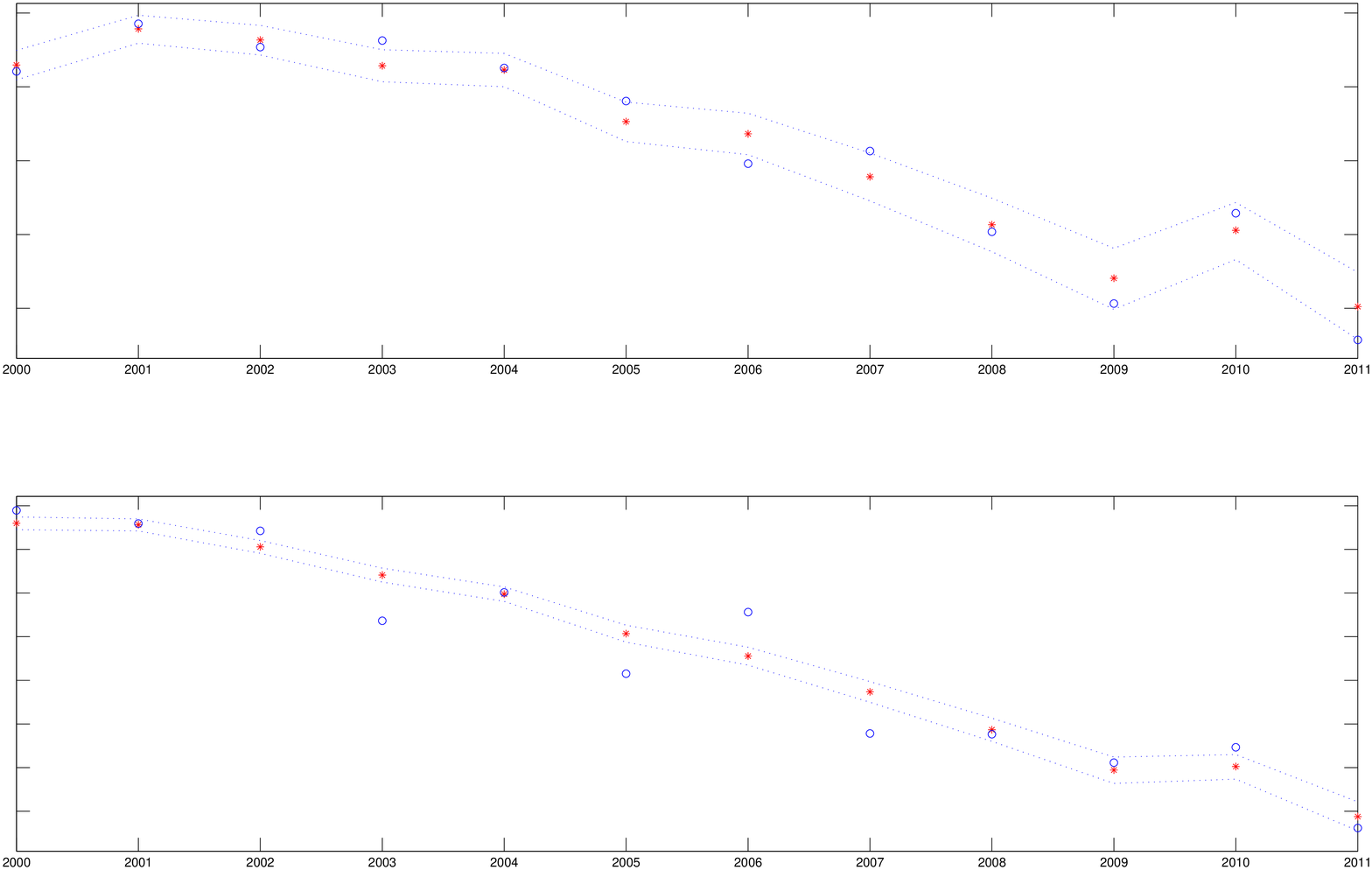,height=0.6\linewidth, width=\linewidth, angle=0}
\vspace{-25pt}
\caption{Estimates of $\nu_{1:n}$ (stars) with confidence bands (dashed). Estimates from \cite{ADL13} (circles) for comparison.}
\label{fig:inception_2par_params}
\end{center}
\end{figure}

\begin{figure}[!ht]
\begin{center}
\epsfig{file=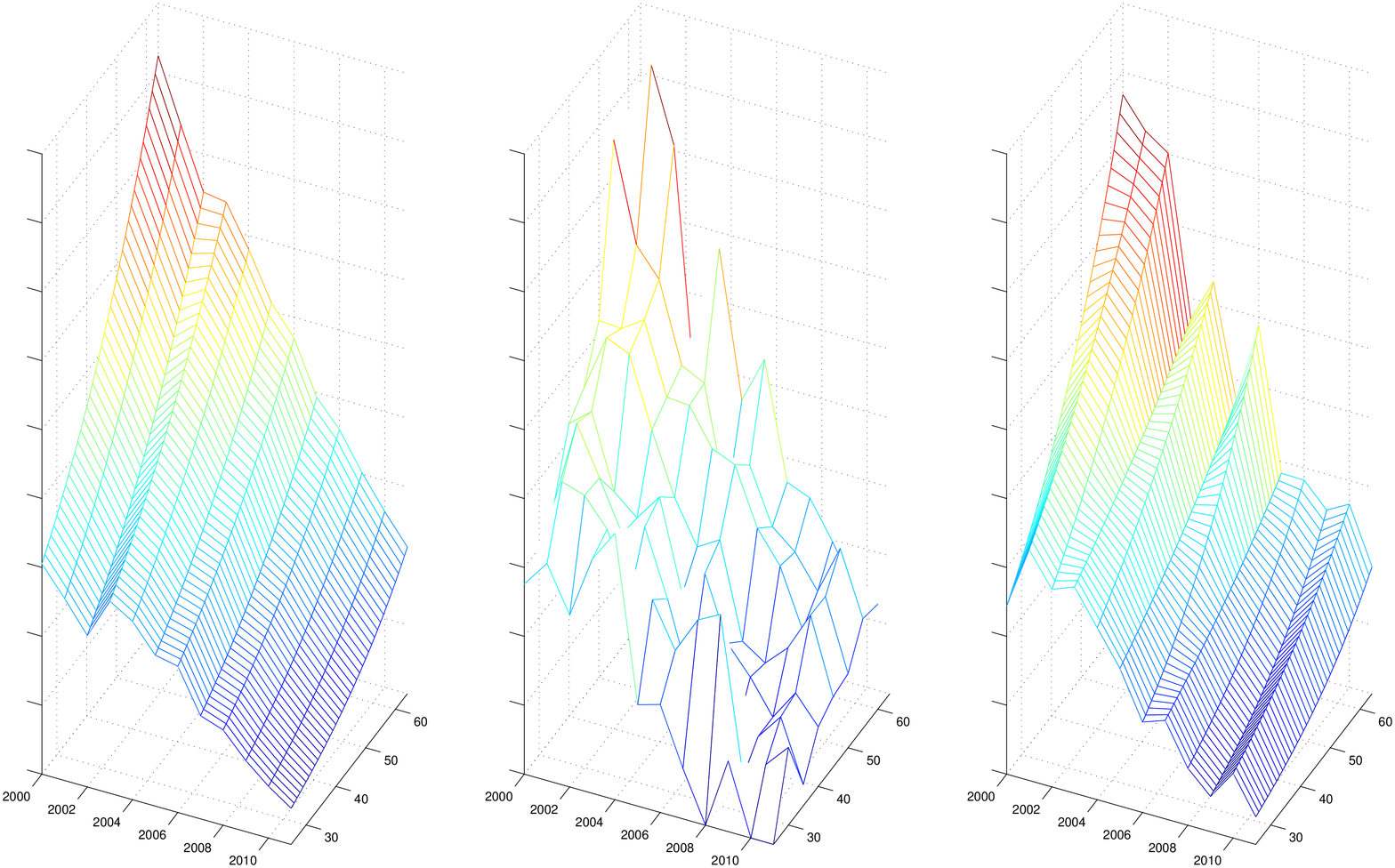,height=0.6\linewidth, width=\linewidth, angle=0}
\vspace{-25pt}
\caption{Left: Estimates of $p_{25:64,1:n}$. Middle: Raw data $D_{25:64,1:n}/E_{25:64,1:n}$. Right: Estimates from \cite{ADL13}.}
\label{fig:inception_2par_mesh}
\end{center}
\end{figure}

\begin{figure}[!ht]
\begin{center}
\epsfig{file=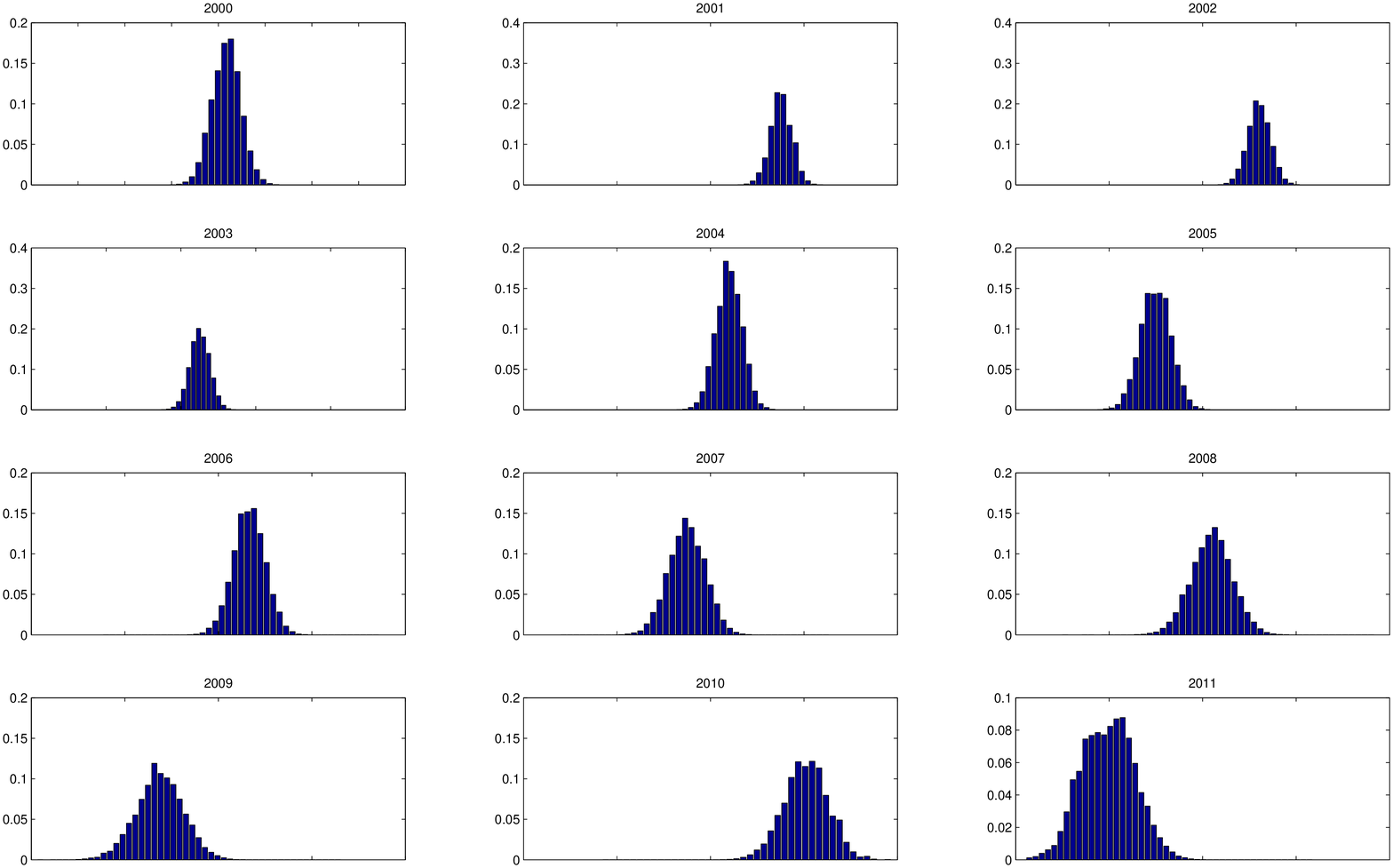,height=0.6\linewidth, width=\linewidth, angle=0}
\vspace{-25pt}
\caption{Estimated filter densities for $\nu^1$.}
\label{fig:inception_2par_filter_1}
\end{center}
\end{figure}

\begin{figure}[!ht]
\begin{center}
\epsfig{file=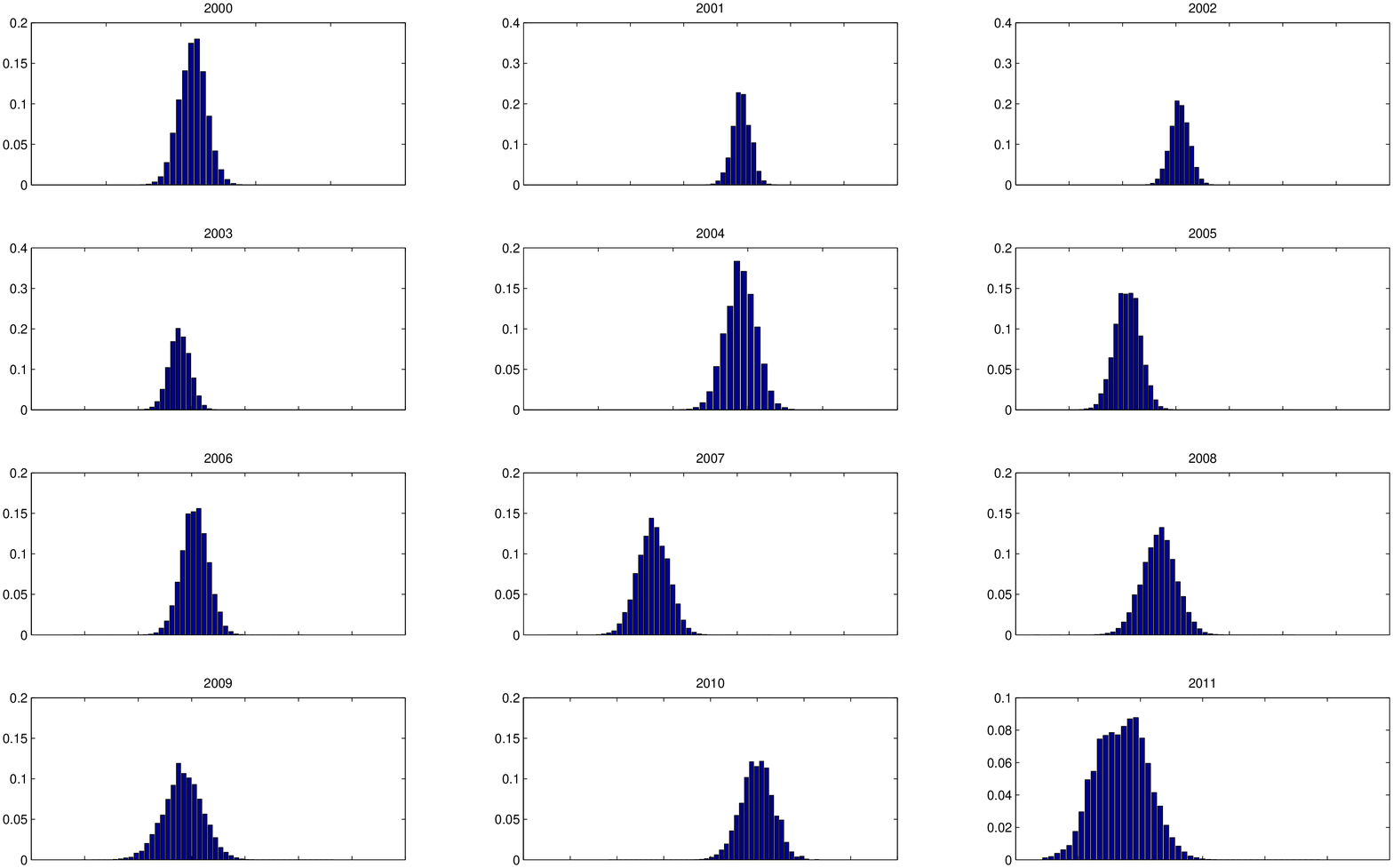,height=0.6\linewidth, width=\linewidth, angle=0}
\vspace{-25pt}
\caption{Estimated filter densities for $\nu^2$.}
\label{fig:inception_2par_filter_2}
\end{center}
\end{figure}

\begin{table}[!ht]
\caption{Relative difference of the estimated drift and volatility parameters between the two models.}
\label{table:2par}
\footnotesize
\begin{center}
\begin{tabular}{c|cc}
 & $\mu$ & $\sigma$ \\
\hline
$\nu^1$ & 0.92 & 0.48\\
$\nu^2$ & 0.93 & 0.23\
\end{tabular} \\
\end{center}
\end{table}
The HMM seems to provide estimates of $\nu_{1:n}$ and $\mu$ that are quite close to the estimates from the multi-period model, where the model is fitted to data for each time period separately. The goodness of fit as measured by the yearly log-likelihood values $l_t(\nu_t;D_{\cdot,t})$ is worse for the HMM, but this is to be expected: Calibration on only one time period will yield a better fit to the data corresponding to that particular time period compared to simultaneous calibration over many time periods, unless the estimates of $\nu_t$ turn out identical. The HMM provides smoothing across time due to the fact that the yearly estimates are essentially parameterized by a stochastic process. 

The purpose of calibrating a model is usually not to obtain the best possible fit to historical data. Rather, the objective should be to obtain the best fit of the law of future disability rates. The estimated diffusion coefficients from the HMM are considerably reduced compared to the multi-period model, which has a significant impact on the generation of future disability rates, prices, risks and capital charges. This is both reasonable expected. Qualitatively, this stems from the fact that yearly variations in parameter values are caused by variations in the underlying process (systematic variation) as well as variations in the underlying population (idiosyncratic variation). The multi-period model makes no distinction between idiosyncratic and systematic variations. The HMM, on the other hand, enables us to distinguish between the two, and in effect the yearly changes of $\nu$ are dampened by the stochastic process interpolation. 

As regards the number of free parameters to be estimated, the multi-period model yields estimates of $\nu_t^i,\ t=1,\ldots,\ n,\ i=1,\ldots,\ m$, for a total of $nm$ free parameters. The drift vector $\mu$ and cholesky matrix $A$ are then simply functions of the yearly parameter values. On the other hand, the HMM yields estimates of $\mu, A$ and $\nu_0$ for a total of $\frac{m^2}{2}+\frac{5m}{2}$ free parameters. The yearly estimates of $\nu_t^i,\ t=1,\ldots,\ n,\ i=1,\ldots,\ m$, are then given as integrals, parameterized by $\mu, A$ and $\nu_0$. Thus, the number of free parameters is lower for the HMM if $\frac{m^2}{2}+\frac{5m}{2} < nm$, or, equivalently, $m<2n-5$. For our sample of $n=12$ years, the HMM has fewer free parameters if $m<19$. Usually, we are content with a much lower dimension for the environment process $\nu$. 

Given these two major advantages, we conclude that, for the purpose of generating future scenarios, the HMM is preferred to the multi-period model. We will now try to refine the model by increasing the number of basis functions.

\subsection{Three-factor model}
Following \cite{ADL13}, we implement the model from Section \ref{sec:inception_multi} with piecewise linear basis functions given by
\begin{align*}
\phi^1(x) &= 
\begin{cases}
1-\frac{x-25}{15} & \text{for $x\in [25,40)$} \\
0 & \text{for $x\in [40,64]$},
\end{cases}\\
\phi^2(x) &= 
\begin{cases}
\frac{1}{15}(x-25) & \text{for $x\in [25,40)$} \\
\frac{64}{24}-\frac{x}{24} & \text{for $x\in [40,64]$},
\end{cases}\\
\phi^3(x) &= 
\begin{cases}
0 & \text{for $x\in [25,40)$}\\
\frac{x}{24}-\frac{40}{24} & \text{for $x\in [40,64]$}.
\end{cases}
\end{align*}
\noindent
The linear combination $\sum_{i=1}^3 \nu_t^i\phi^i(x)$ is now piecewise linear and continuous, with mid point at $x=40$ years. As in the two-factor model, the values of the factors are points on the logit inception probability curve: $\logit p_{25,t}=\nu^1_t$, $\logit p_{40,t}=\nu^2_t$ and $\logit p_{64,t}=\nu^3_t$.

The EM-algorithm stabilizes to within Monte Carlo error after about 250 iterations. The estimated inception probabilities are displayed in Figures \ref{fig:inception_3par_params}-\ref{fig:inception_3par_mesh}. For reference, they are compared to the estimations from the multi-period model. Table \ref{table:3par} displays the estimated drift and volatility parameters from the HMM as a fraction of the corresponding estimates from the multi-period model.
\begin{figure}[!ht]
\begin{center}
\epsfig{file=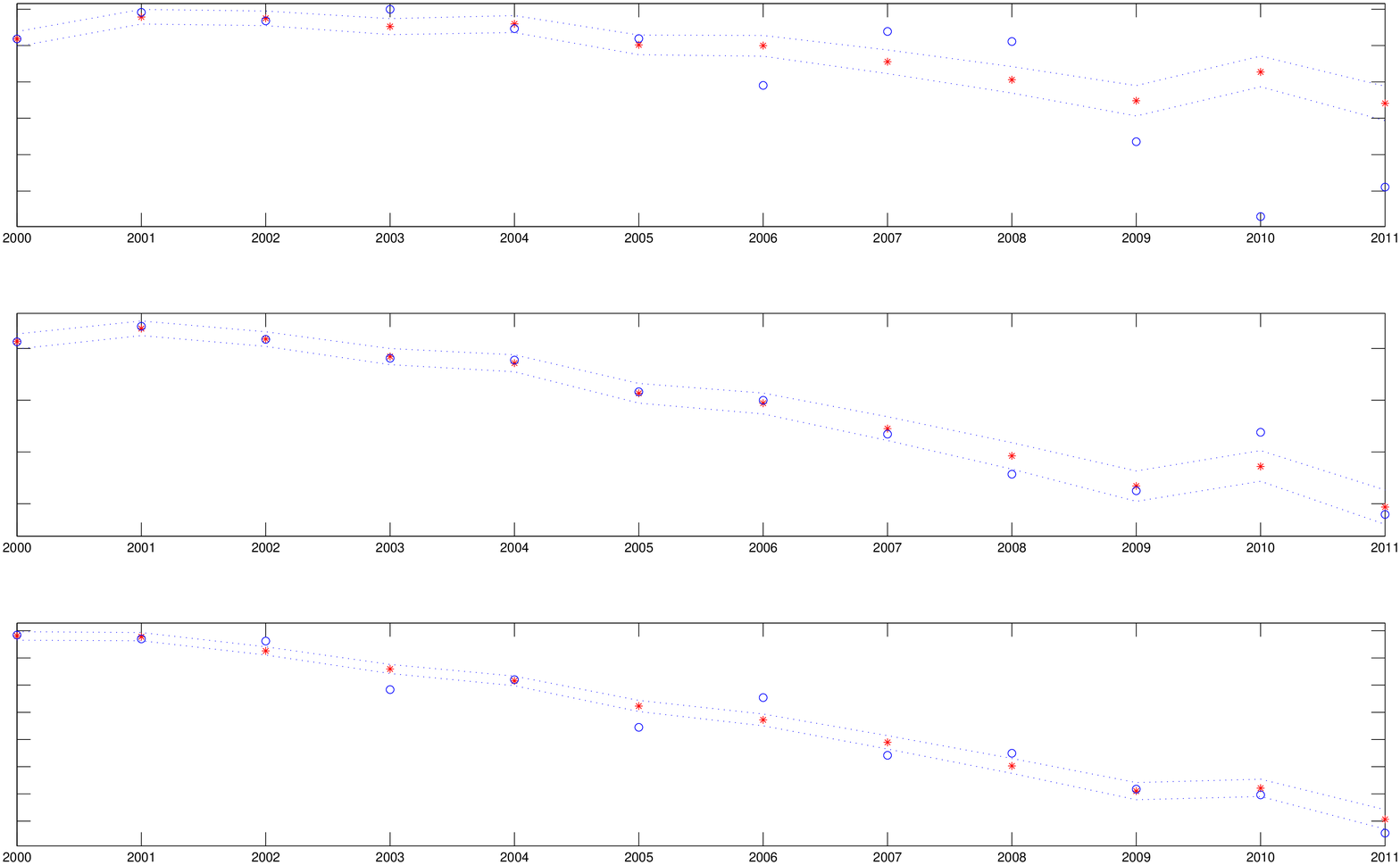,height=0.6\linewidth, width=\linewidth, angle=0}
\vspace{-25pt}
\caption{Estimates of $\nu_{1:n}$ (stars) with confidence bands (dashed). Estimates from \cite{ADL13} (circles) for comparison.}
\label{fig:inception_3par_params}
\end{center}
\end{figure}

\begin{figure}[!ht]
\begin{center}
\epsfig{file=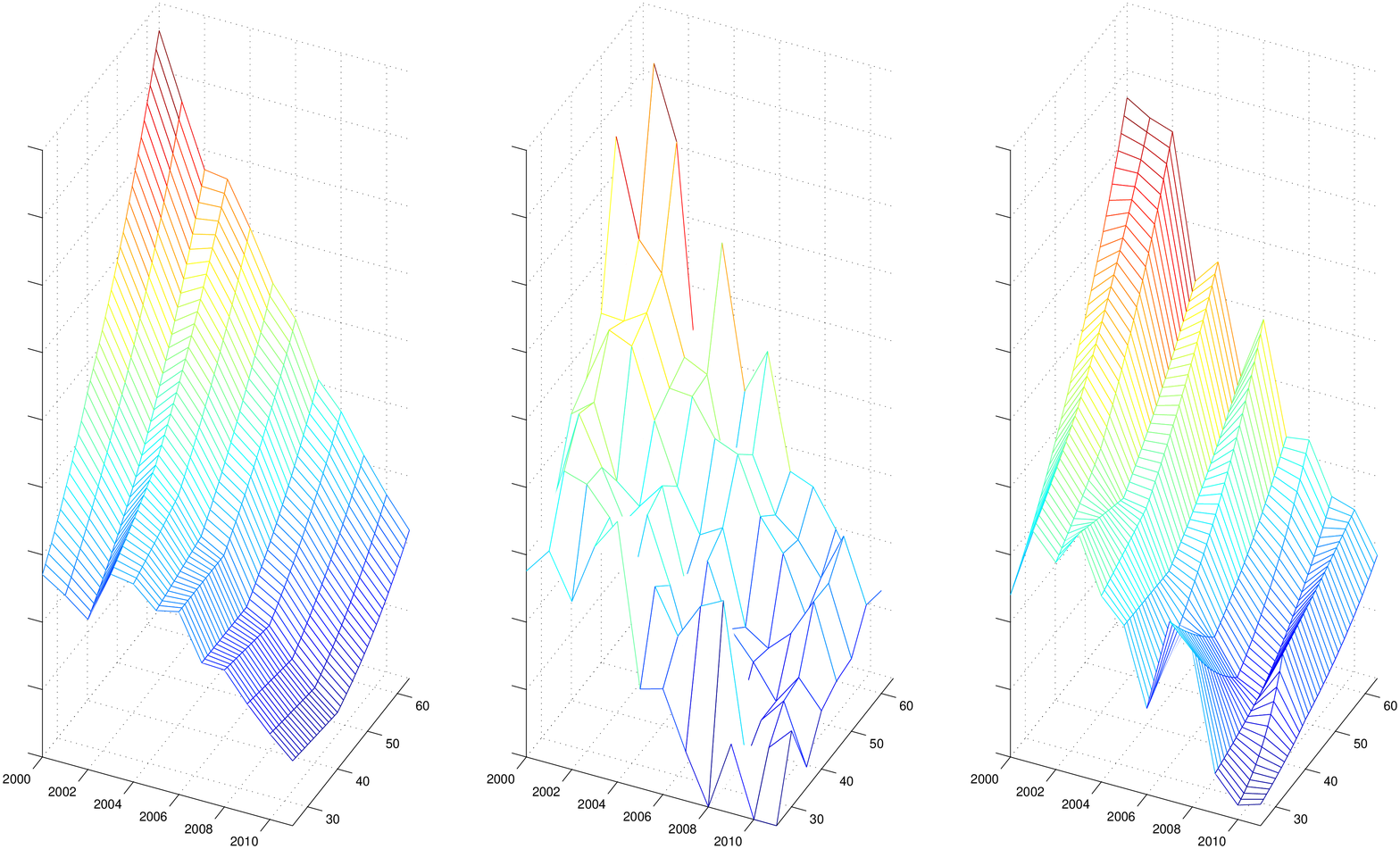,height=0.6\linewidth, width=\linewidth, angle=0}
\vspace{-25pt}
\caption{Left: Estimates of $p_{25:64,1:n}$. Middle: Raw data $D_{25:64,1:n}/E_{25:64,1:n}$. Right: Estimates from \cite{ADL13}.}
\label{fig:inception_3par_mesh}
\end{center}
\end{figure}

\begin{table}[!ht]
\caption{Relative difference of the estimated drift and volatility parameters between the two models.}
\label{table:3par}
\footnotesize
\begin{center}
\begin{tabular}{c|ccc}
 & $\mu$ & $\sigma$ \\
\hline
$\nu^1$ & 0.45 & 0.32 \\
$\nu^2$ & 0.98 & 0.42 \\
$\nu^3$ & 0.93 & 0.29 \\
\end{tabular} \\
\end{center}
\end{table}
The HMM seems to provide estimates of $\nu_{1:n}$ and $\mu$ that are quite close to the estimates from the multi-period model, at least for $\nu^2$ and $\nu^3$. For $\nu^1$, the differences are quite pronounced from the year 2006 and onwards. It may be that the Brownian motion assumption could be altered, but this is a topic for future research. We return to this discussion point later on. 

As was the case with the two-factor model, the estimated diffusion coefficients are considerably reduced. Judging from Figures \ref{fig:inception_3par_params}-\ref{fig:inception_3par_mesh}, adding more complexity in terms of another underlying factor does not seem to significantly enhance the model. Here, the results are based on a midpoint of $x=40$ years, but from extensive testing we find that the conclusion is valid for any choice of midpoint.

\section{Disability termination model}\label{sec:termination}
The algorithm from Section \ref{sec:inception_multi} can easily be applied to obtain an extension of the disability termination model from \cite{ADL13}.

Let $E_{x,d,t}$ be the number of individuals with disability inception age $x$ and disability duration $d$ at some point in the time period $[t,t+1)$. Further, let $R_{x,d,t}$ denote the number of individuals among $E_{x,d,t}$ with termination during $[t,t+1)$ and $[d,d+\Delta d)$. Further, let $\nu$ be a $k\times m$-dimensional Brownian motion starting at $\nu_0$ with drift vector $\mu$ and Cholesky matrix $A$. We assume that the conditional distribution of $R_{x,d,t}$ given $E_{x,d,t}$ and $\nu_{t}$ is binomial:
\begin{equation}\label{eq:bin_avv}
R_{x,d,t}\sim\bin(E_{x,d,t},p_{x,d,t}),
\end{equation}
where 
\begin{equation}
p_{x,d,t}:=\frac{1}{1+e^{-g(x,d,\nu_t)}}
\end{equation}
\noindent
denotes the probability that the disability of an individual, with disability inception age $x$ and disability duration $d$ at some point in the time period $[t,t+1)$, is terminated before duration $d+\Delta d$. Again, we adopt the basis function approach from \cite{ADL13}, and choose a function $g:\R^+\times\R^+\times\R^{k\times m}\mapsto \R$ of the form
\begin{equation}\label{eq:logit_avv}
g(x,d,\nu_t) =\sum_{i=1}^m \phi^i(x)\sum_{j=1}^k\psi^j(d)\nu^{i,j}_t,
\end{equation}
\noindent
where $\phi$ and $\psi$ are basis functions in $x$ and $d$, respectively.

Now, assume that we observe $R_{x,d,1:n}:= (R_{x,d,1},\ldots,R_{x,d,n})$, for $x$ and $d$ from given sets $X$ and $D$ of ages and disability durations, respectively. Let $\theta = (\mu, A, \nu_0)$. Then, the complete data log-likelihood is given by
\begin{align}\label{logl_avv}
l(\theta;R_{\cdot,\cdot,1:n},\nu_{1:n}) &= \sum_{t=1}^n\Big[\sum_{\substack{x\in X\\d\in D}}\Big\{R_{x,d,t}g(x,d,\nu_t) - E_{x,d,t}\log\big(1+\exp\big\{g(x,d,\nu_t)\big\}\big)\Big\}\nonumber\\
&+ \log f_{\nu_t|\nu_{t-1}}(\theta) + c_t\Big],
\end{align}
\noindent
where $f$ denotes the density of $\nu_t$ given $\nu_{t-1}$, and $c_t$ is a constant. Integrating the log-likelihood and discarding terms that do not depend on $\theta$, we again obtain
\begin{align}\label{eq:Q_tilde_avv}
Q(\theta|\theta^k) &= \sum_{t=1}^n\Big[-\frac{1}{2}E^{\theta^k}[(\nu_t-\nu_{t-1}-\mu)^T(AA^T)^{-1}(\nu_t-\nu_{t-1}-\mu)|R_{\cdot,\cdot,1:n}]\nonumber\\
& - \frac{1}{2}\log(\det(AA^T)) \Big].
\end{align}
\noindent
The expectation and maximization steps are carried out exactly as in Sections \ref{sec:EM}-\ref{sec:EM2}.

\section{Fitting Swedish termination rates}\label{sec:termination_swedish}
In this section, we implement the EM-algorithm from Sections \ref{sec:EM}-\ref{sec:EM2} for the disability termination model from Section \ref{sec:termination}, and fit it to population data from Folksam.

\subsection{Four-factor model}
We propose an initial model for the termination probabilities with linear basis functions in both the age dimension $x$ and the duration dimension $d$:
\begin{eqnarray*}
\phi^1(x) & = & \frac{64-x}{39},\\
\phi^2(x) & = & \frac{x-25}{39}, \\
\psi^1(d) & = & 1,\\
\psi^2(d) & = & d.\\
\end{eqnarray*}
\noindent
The obtained termination curve for a 25-year old is only affected by the term containing $\phi^1$. Notice that, since $\phi^1(25)=1$ and $\phi^2(25)=0$, $\logit p(25,d,\nu_t) = \nu^{1,1}_t + d\nu^{1,2}_t$. In other words, the processes $\nu^{1,1} $ and $\nu^{1,2}$ correspond to the termination curve for a 25-year old. By the same argument, the processes $\nu^{2,1}$ and $\nu^{2,2}$ correspond to the termination curve for a 64-year old. The logistic conditional probability of termination for an $x$ year old is thus a convex combination of the logistic conditional probabilities of termination for a 25-year old and a 64-year old.

The EM-algorithm stabilizes to within Monte Carlo error after about 260 iterations. The estimated parameters are displayed in Figure \ref{fig:termination_4par_params}. Figures \ref{fig:termination_4par_mesh_f2006}-\ref{fig:termination_4par_mesh_f2010} displays the termination surface from the model alongside the corresponding Kaplan-Meier curves for the years 2006 and 2010, respectively. Results from \cite{ADL13} are displayed for reference. Table \ref{table:4par} displays the estimated drift and volatility parameters from the HMM as a fraction of the corresponding estimates from the multi-period model.

\begin{figure}[!ht]
\begin{center}
\epsfig{file=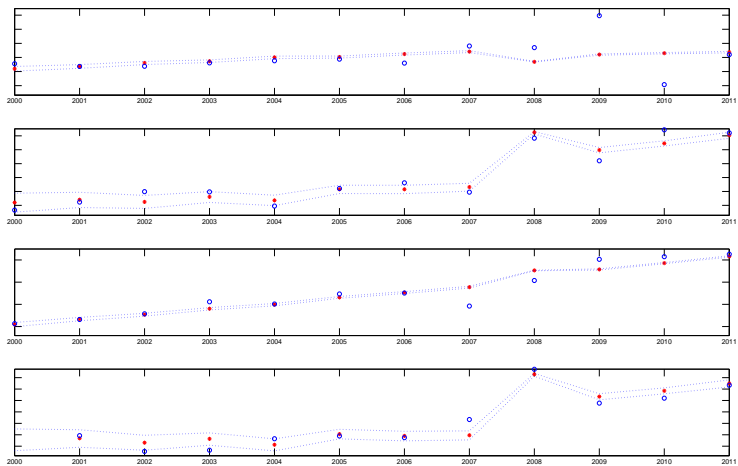,height=0.6\linewidth, width=\linewidth, angle=0}
\vspace{-25pt}
\caption{Estimates of $\nu_{1:n}$ (stars) with confidence bands (dashed). Estimates from \cite{ADL13} (circles) for comparison.}
\label{fig:termination_4par_params}
\end{center}
\end{figure}

\begin{figure}[!ht]
\begin{center}
\epsfig{file=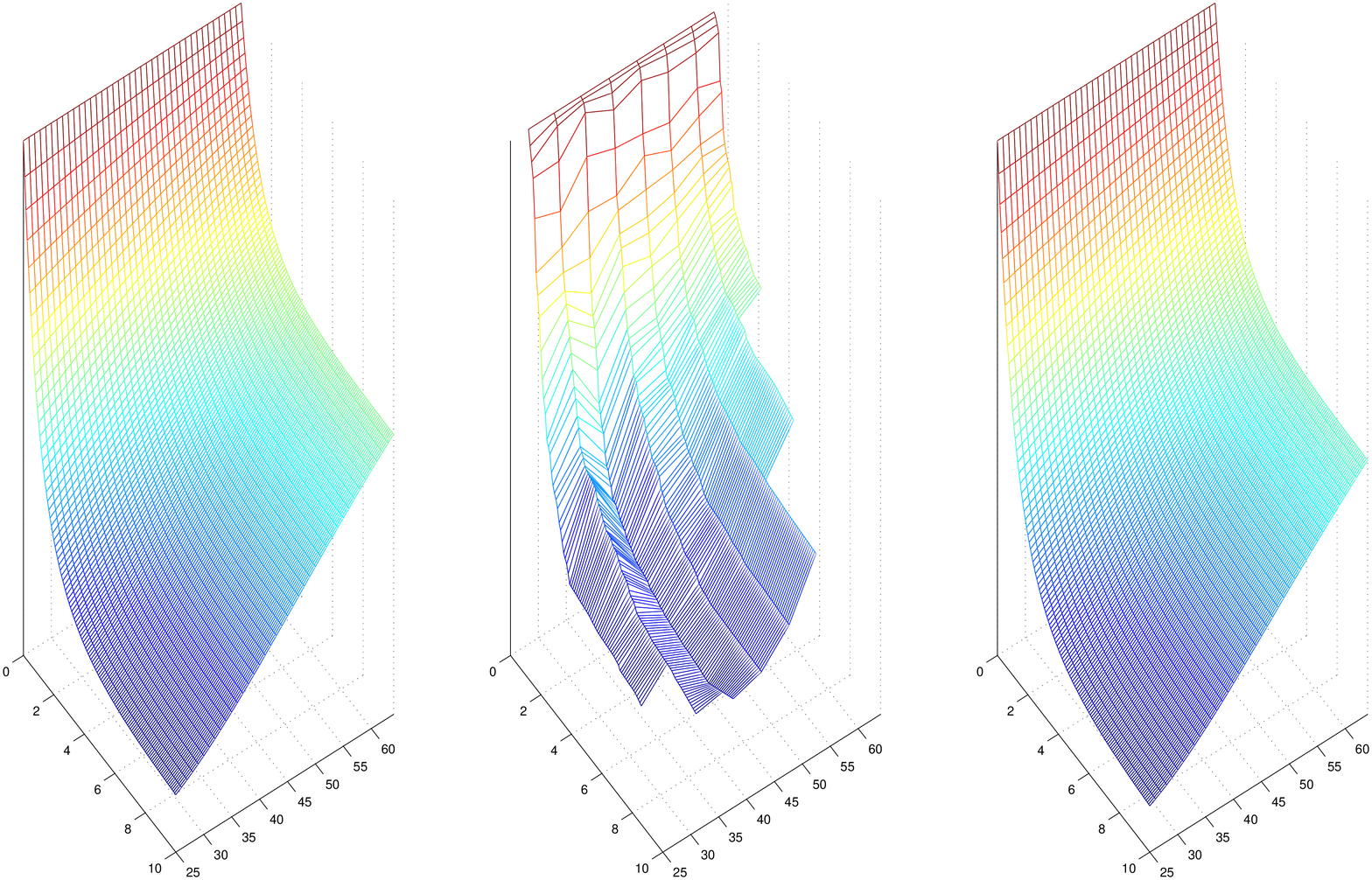,height=0.6\linewidth, width=\linewidth, angle=0}
\vspace{-25pt}
\caption{Left: Fitted termination surface, females 2006. Center: Kaplan-Meier. Right: Estimates from \cite{ADL13}.}
\label{fig:termination_4par_mesh_f2006}
\end{center}
\end{figure}

\begin{figure}[!ht]
\begin{center}
\epsfig{file=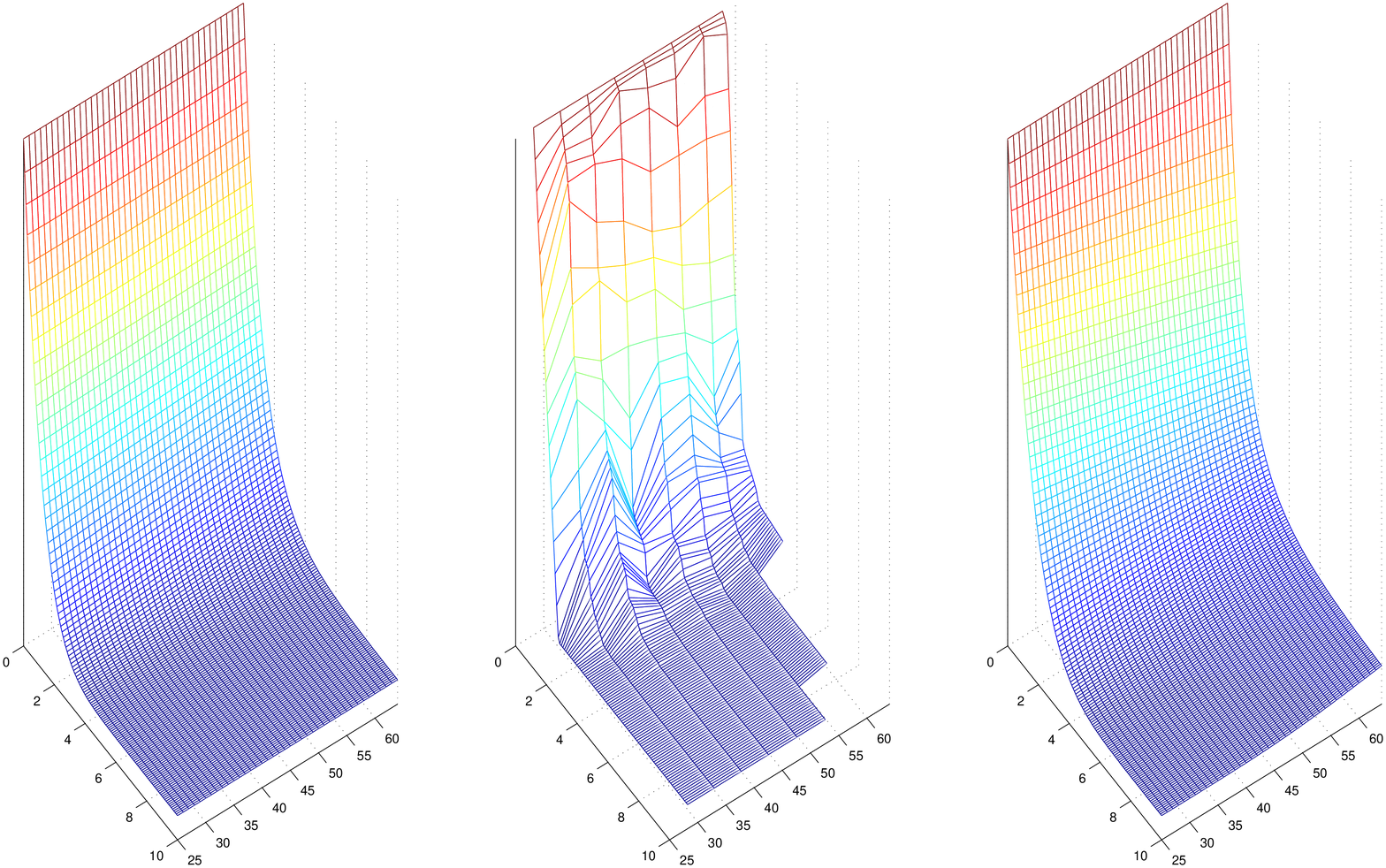,height=0.6\linewidth, width=\linewidth, angle=0}
\vspace{-25pt}
\caption{Left: Fitted termination surface, females 2010. Center: Kaplan-Meier. Right: Estimates from \cite{ADL13}.}
\label{fig:termination_4par_mesh_f2010}
\end{center}
\end{figure}

\begin{table}[!ht]
\caption{Relative difference of the estimated drift and volatility parameters between the two models.}
\label{table:4par}
\footnotesize
\begin{center}
\begin{tabular}{c|ccc}
 & $\mu$ & $\sigma$ \\
\hline
$\nu^1$ & 1.82 & 0.15 \\
$\nu^2$ & 0.88 & 0.75 \\
$\nu^3$ & 0.98 & 0.33 \\
$\nu^4$ & 1.08 & 0.90 \\
\end{tabular} \\
\end{center}
\end{table}
The HMM seems to provide estimates of $\nu_{1:n}$ and $\mu$ that are quite close to the estimates from the multi-period model, at least for $\nu^2$, $\nu^3$ and $\nu^4$. For $\nu^1$, the differences are quite pronounced from the year 2008 and onwards. Even so, the estimated termination surfaces from the HMM are close to the multi-period model surfaces for both 2006 and 2010. Further, we see that the estimated volatilities of $\nu^1,\ldots,\ \nu^4$ are lower for the HMM compared to the multi-period model, which is both reasonable and expected. The relatively large deviance for $\mu_1$ can be explained by the fact that the corresponding estimate for the multi-period model is close to zero, so the absolute difference is small.

\subsection{Six-factor model}
In order to refine the model we extend it to the following six-factor model:
\begin{eqnarray*}
\phi^1(x) & = & \frac{64-x}{39},\\
\phi^2(x) & = & \frac{x-25}{39}, \\
\psi^1(d) & = & 1,\\
\psi^2(d) & = & e^{-d},\\
\psi^3(d) & = & e^{-2d}.\\
\end{eqnarray*}
\noindent
The EM-algorithm stabilizes to within Monte Carlo error after about 140 iterations. The estimated parameters are displayed in Figure \ref{fig:termination_6par_params}. Figures \ref{fig:termination_6par_mesh_f2006}-\ref{fig:termination_6par_mesh_f2010} display the termination surface from the model alongside the corresponding Kaplan-Meier curves for the years 2006 and 2010, respectively. Results from \cite{ADL13} are displayed for reference. Table \ref{table:6par} displays the estimated drift and volatility parameters from the HMM as a fraction of the corresponding estimates from the multi-period model. 
\begin{figure}[!ht]
\begin{center}
\epsfig{file=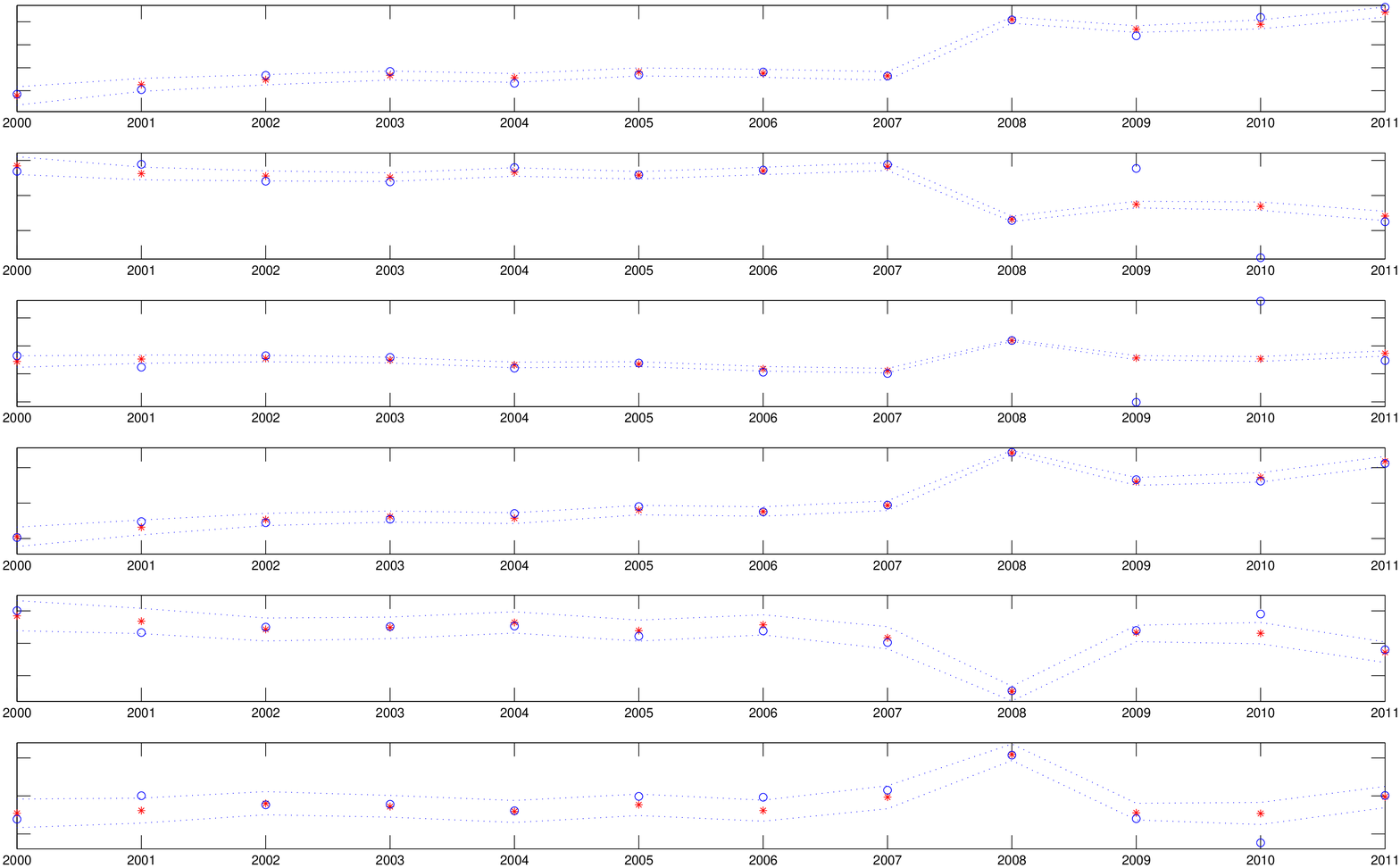,height=0.6\linewidth, width=\linewidth, angle=0}
\vspace{-25pt}
\caption{Estimates of $\nu_{1:n}$ (stars) with confidence bands (dashed). Estimates from \cite{ADL13} (circles) for comparison.}
\label{fig:termination_6par_params}
\end{center}
\end{figure}

\begin{figure}[!ht]
\begin{center}
\epsfig{file=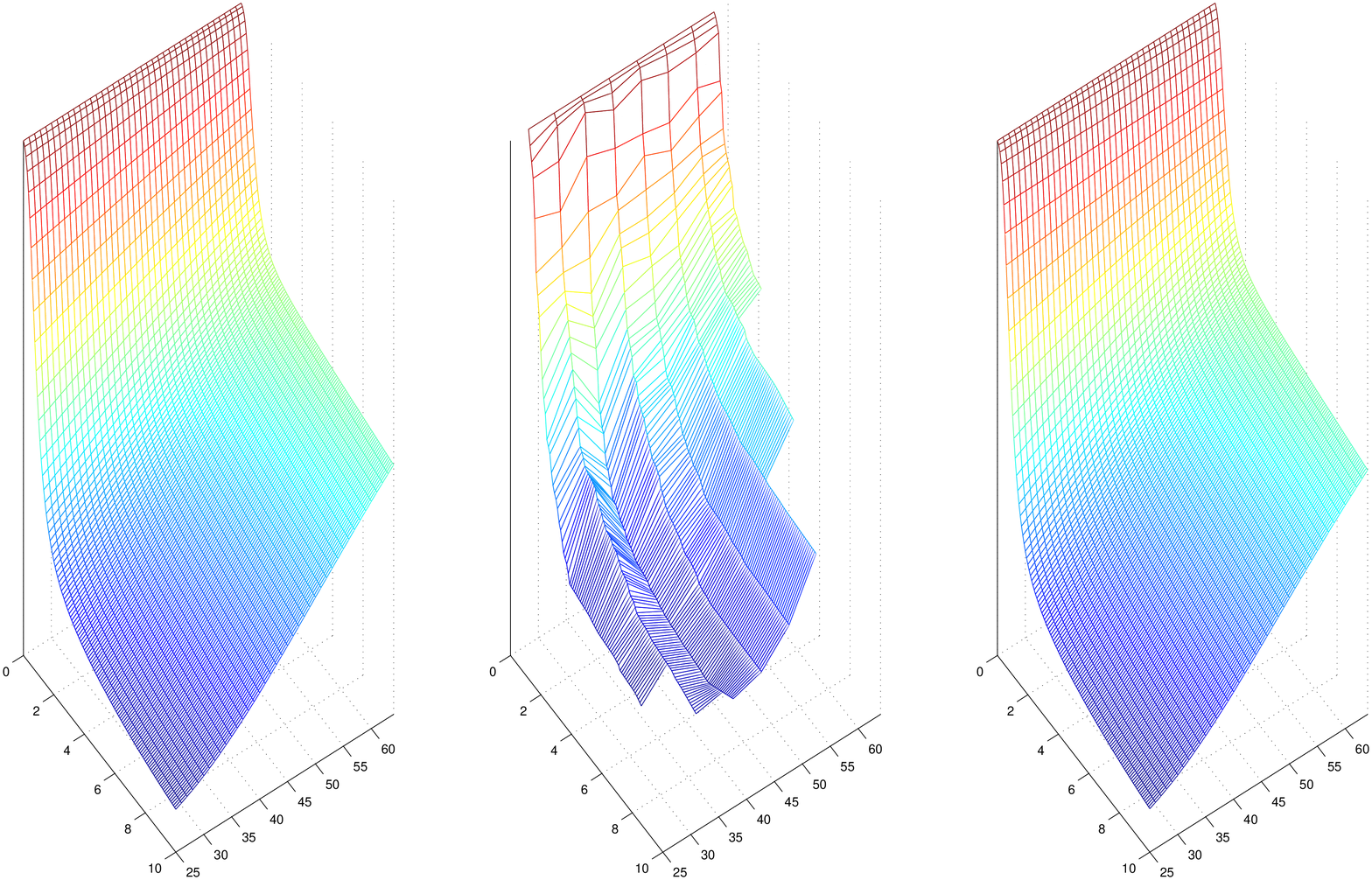,height=0.6\linewidth, width=\linewidth, angle=0}
\vspace{-25pt}
\caption{Left: Fitted termination surface, females 2006. Center: Kaplan-Meier. Right: Estimates from \cite{ADL13}.}
\label{fig:termination_6par_mesh_f2006}
\end{center}
\end{figure}

\begin{figure}[!ht]
\begin{center}
\epsfig{file=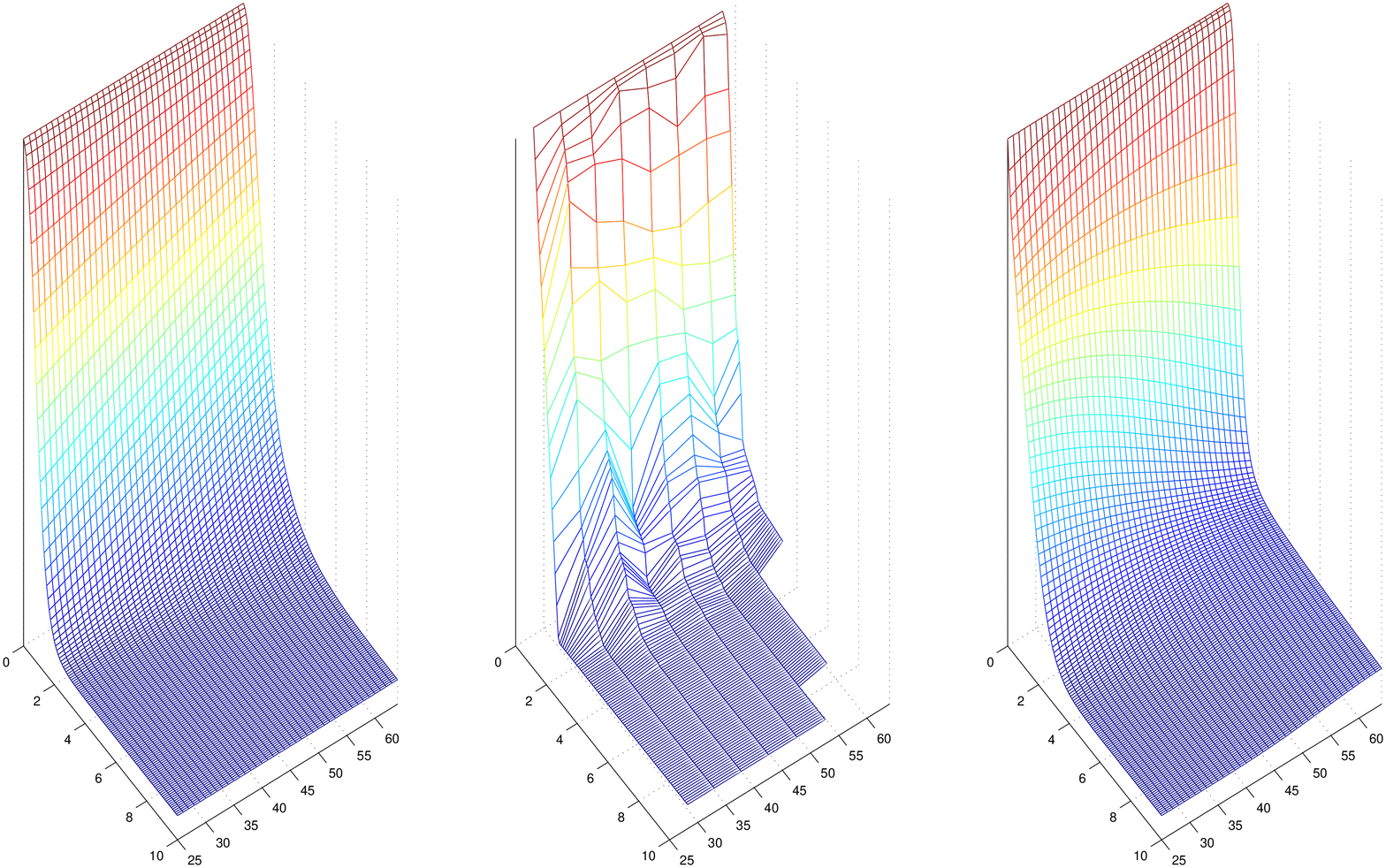,height=0.6\linewidth, width=\linewidth, angle=0}
\vspace{-25pt}
\caption{Left: Fitted termination surface, females 2010. Center: Kaplan-Meier. Right: Estimates from \cite{ADL13}.}
\label{fig:termination_6par_mesh_f2010}
\end{center}
\end{figure}

\begin{table}[!ht]
\caption{Relative difference of the estimated drift and volatility parameters between the two models.}
\label{table:6par}
\footnotesize
\begin{center}
\begin{tabular}{c|ccc}
 & $\mu$ & $\sigma$ \\
\hline
$\nu^1$ & 0.96 & 0.84 \\
$\nu^2$ & 0.99 & 0.41 \\
$\nu^3$ & -1.56 & 0.24 \\
$\nu^4$ & 1.01 & 0.89 \\
$\nu^5$ & 0.91 & 0.85 \\
$\nu^6$ & 0.66 & 0.74 \\
\end{tabular} \\
\end{center}
\end{table}
As for the four-factor model, we see that the estimated volatilities of $\nu^1,\ldots,\ \nu^6$ are lower for the HMM compared to the multi-period model. The estimates of $\nu_{1:n}$ and $\mu$ are similar to the estimates from the multi-period model, with some differences appearing from 2009 and onwards. Still, the estimated termination surfaces from the HMM are again close to the multi-period model surfaces for both 2006 and 2010, even though the estimated parameters for 2010 differ significantly between the two models. Since this was also the case with the four-factor model, it suggests a certain degree of robustness for the fitting procedure. The relatively large deviance for $\mu_3$, which even shows a change of sign, can be explained by the fact that the corresponding estimate for the multi-period model is close to zero, so that the absolute difference is small.

As a final note, we briefly comment on the nature of the unobservable environment process. The reform of the Swedish health insurance system in 2008 introduced harsher rules for obtaining benefits. On the other hand, for the proposed reform of 2014 the rules for obtaining benefits will become more lenient. Moreover, it is also possible that the population disability pattern follow the macro-economic trends of society in one way or another. All together, we are led to believe that the environment process follows a certain mean-reverting pattern. Unfortunately, the the data set used in this paper covers the rather short time period from 2000-2011, during which it may be hard to observe any mean-reverting pattern of the process. A topic for future research is to collect data from the new post-2014 regime and investigate whether mean-reverting processes such as the multivariate Vasicek model can be used to model the environment process.

\section{Acknowledgements}
The first author gratefully acknowledges financial support from the Swedish Export Credit Corp. (SEK). The second author gratefully acknowledges financial support from the Filip Lundberg  and Eir's 50 Years foundations.

\clearpage
\newpage
\bibliography{hmm}
\bibliographystyle{acm}

\end{document}